\documentclass[aps,pra,twocolumn,nofootinbib,superscriptaddress]{revtex4-2}

\usepackage[]{graphics,graphicx,epsfig}
\usepackage{subfigure}
\usepackage{bbm}
\usepackage{amsmath}
\usepackage{amssymb}
\usepackage{amsfonts}
\usepackage{mathrsfs}
\usepackage{bm}
\usepackage{url}
\usepackage[T1]{fontenc}
\usepackage{csquotes}
\MakeOuterQuote{"}

\usepackage{algorithm}
\usepackage{algorithmicx}
\usepackage{algpseudocode}

\usepackage{dcolumn}
\usepackage{color}
\usepackage{hyperref}
\usepackage{booktabs}
\usepackage{nicefrac} 
\usepackage{amsthm}
\usepackage{mathdots}
\usepackage{amsfonts,amssymb,amsmath}
\usepackage[dvipsnames]{xcolor}

\newcommand{\caH}{\mathcal{H}}

\newcommand{\tr}{\operatorname{tr}}

\newtheorem{lemma}{Lemma}
\newtheorem{proposition}{Proposition}

\def\<{\langle}  
\def\>{\rangle}  
\def\eqref#1{\textup{(\ref{#1})}}
\newcommand{\eref}[1]{Eq.~\textup{(\ref{#1})}}
\newcommand{\lref}[1]{Lemma~\ref{#1}}

\usepackage{enumitem}      
\renewcommand{\figurename}{Figure}
\begin{document}
	\preprint{}
	
	\title{Experimental Verification of Entangled States in the Adversarial Scenario}
	
\author{Wen-Hao Zhang}
\thanks{These authors contributed equally to this work}
\affiliation{CAS Key Laboratory of Quantum Information, University of Science and Technology of China, Hefei, Anhui 230026, China}
\affiliation{CAS Center For Excellence in Quantum Information and Quantum Physics, University of Science and Technology of China, Hefei, Anhui 230026, China}
\affiliation{School of Physics and Optoelectronics Engineering, Anhui University, Hefei, Anhui 230601, China}

\author{Zihao Li}
\thanks{These authors contributed equally to this work}
\affiliation{State Key Laboratory of Surface Physics, Department of Physics, and Center for Field Theory and Particle Physics, Fudan University, Shanghai 200433, China}
\affiliation{Institute for Nanoelectronic Devices and Quantum Computing, Fudan University, Shanghai 200433, China}
\affiliation{Shanghai Research Center for Quantum Sciences, Shanghai 201315, China}

\author{Gong-Chu Li}
\affiliation{CAS Key Laboratory of Quantum Information, University of Science and Technology of China, Hefei, Anhui 230026, China}
\affiliation{CAS Center For Excellence in Quantum Information and Quantum Physics, University of Science and Technology of China, Hefei, Anhui 230026, China}
\affiliation{Hefei National Laboratory,  Hefei 230088, China}

\author{Xu-Song Hong}
\affiliation{CAS Key Laboratory of Quantum Information, University of Science and Technology of China, Hefei, Anhui 230026, China}
\affiliation{CAS Center For Excellence in Quantum Information and Quantum Physics, University of Science and Technology of China, Hefei, Anhui 230026, China}

\author{Huangjun~Zhu}
\email{zhuhuangjun@fudan.edu.cn}
\affiliation{State Key Laboratory of Surface Physics, Department of Physics, and Center for Field Theory and Particle Physics, Fudan University, Shanghai 200433, China}
\affiliation{Institute for Nanoelectronic Devices and Quantum Computing, Fudan University, Shanghai 200433, China}
\affiliation{Shanghai Research Center for Quantum Sciences, Shanghai 201315, China}
\affiliation{Hefei National Laboratory,  Hefei 230088, China}
	
\author{Geng Chen}
	\email{chengeng@ustc.edu.cn}
	\affiliation{CAS Key Laboratory of Quantum Information, University of Science and Technology of China, Hefei, Anhui 230026, China}
	\affiliation{CAS Center For Excellence in Quantum Information and Quantum Physics, University of Science and Technology of China, Hefei, Anhui 230026, China}
    \affiliation{Hefei National Laboratory,  Hefei 230088, China}

	\author{Chuan-Feng Li}
	\email{cfli@ustc.edu.cn}
	\affiliation{CAS Key Laboratory of Quantum Information, University of Science and Technology of China, Hefei, Anhui 230026, China}
	\affiliation{CAS Center For Excellence in Quantum Information and Quantum Physics, University of Science and Technology of China, Hefei, Anhui 230026, China}
    \affiliation{Hefei National Laboratory,  Hefei 230088, China}
	
	\author{Guang-Can Guo}
	\affiliation{CAS Key Laboratory of Quantum Information, University of Science and Technology of China, Hefei, Anhui 230026, China}
	\affiliation{CAS Center For Excellence in Quantum Information and Quantum Physics, University of Science and Technology of China, Hefei, Anhui 230026, China}
    \affiliation{Hefei National Laboratory,  Hefei 230088, China}
	
	\date{\today}
	
\begin{abstract}
Efficient verification of entangled states is crucial to many applications in quantum information processing. However, the effectiveness of standard quantum state verification (QSV) is based on the condition of independent and identical distribution (IID), which impedes its applications in many practical scenarios. Here we demonstrate a defensive QSV protocol, which is effective in all kinds of  non-IID scenarios, including the extremely challenging adversarial scenario. To this end, we build a high-speed preparation-and-measurement apparatus controlled by quantum random-number generators. Our experiments clearly show that standard QSV protocols 
often provide unreliable fidelity certificates  in non-IID scenarios. In sharp contrast, the defensive QSV protocol based on a homogeneous strategy can provide  reliable and nearly tight fidelity certificates  at comparable high efficiency, even under malicious attacks. Moreover, our scheme is robust against the imperfections in a realistic experiment, which is very appealing to practical applications.
\end{abstract}

	\maketitle
	\clearpage

\section{Introduction}
Entangled quantum states play crucial roles both in foundational studies and in quantum information processing, such as quantum teleportation \cite{bennett1993teleporting,bouwmeester1997experimental},  quantum cryptography \cite{gisin2002quantum,portmann2022security}, and quantum computation \cite{raussendorf2001one,preskill2018quantum}. Hence, efficient and reliable verification of desired quantum states is an important step in practical applications of quantum technologies \cite{eisert2020quantum,kliesch2021ReviewQSV,carrasco2021theoretical,morris2022quantum,yu2022ReviewQSV,govcanin2022sample}.
However, traditional tomographic approaches are too resource-consuming to achieve this goal because  they extract too much unnecessary information \cite{resch2005full,haffner2005scalable,lvovsky2009continuous,sugiyama2013precision}. Even with popular alternative approaches, such as direct fidelity estimation \cite{flammia2011direct,PhysRevLett.107.210404,zhang2021direct}, the sample efficiency is still not satisfactory except for some special cases.

In the past few years, an alternative approach called quantum state verification (QSV) has received increasing attention because of its potential to extract the key
information---fidelity with the target state---with remarkably lower sample cost than traditional approaches \cite{pallister2018PRL,zhu2019advPRA}.
So far, efficient verification protocols based on local operations and classical communication have been constructed for various classes of quantum states
\cite{hayashi2006study,GluzKEA18,pallister2018PRL,zhu2019MES,wang2019twoqubit,li2019bipartite,yu2019bipartite,zhu2019hypergraph,liu2019Dicke,dangniam2020stabilizer,li2020GHZ,zhu2019advPRA,li2021Dicke,liu2021CVstate,wu2021CVstatePRL,chen2023AKLT,zhu2022Hamiltonian}, and the efficiency of QSV has been demonstrated in a number of experiments \cite{zhang2020PRLexperiQSV,zhang2020npjExperiQSV,jiang2020npjExperiQSV,xia2022experimental}. 

However, standard QSV (SQSV) protocols rely on the  assumption of independent and identical distribution (IID), which is not guaranteed in many practical situations, because  realistic quantum devices may suffer from correlated noise, such as unwanted interqubit coupling or quantum crosstalk \cite{Google,Crosstalk}. 
In this case, the naive acceptance of the IID assumption without rigorous correlation testing may undermine verification reliability and 
lead to incorrect predictions \cite{PhysRevA.75.052318}.
Moreover, in the adversarial scenario, which is the most challenging situation within the non-IID landscape,  the state-preparation device may even be controlled by  a potentially malicious adversary, Bob, who may try every means to cheat the client, Alice \cite{zhu2019advLett,hayashi2015PRLverifiable,zhu2019advPRA}.  Efficient and reliable QSV in this scenario is crucial to many important applications that entail high-security requirements, such as blind measurement-based quantum computation \cite{hayashi2015PRLverifiable,morimae2013blind,fujii2017verifiable,hayashi2018self,takeuchi2019npjSerfling}. However, this problem is much more challenging because the states prepared by Bob in different runs may be correlated or even entangled with each other.

To address the non-IID problem, one approach is to use the quantum de Finetti theorem \cite{CaveFS02,christandl2007one, renner2007symmetry, PhysRevLett.114.160503}, in which permutation invariance plays a crucial role.  
That is, if we randomly permute a sequence of non-IID quantum systems and trace out some subset, then the joint reduced state on the remaining systems is approximated by a probabilistic mixture of IID states.
Based on this intuition, Refs.~\cite{PhysRevLett.109.120403, van2013quantum,PhysRevA.87.062331} showed that the problems of state tomography and entanglement verification can be solved in the non-IID scenario by randomizing the order of different measurements.

By using a similar idea based on randomized measurements, several protocols \cite{PhysRevA.96.062321,PhysRevX.8.021060,takeuchi2019npjSerfling, hayashi2015PRLverifiable,fawzi2024learning} have been proposed for verifying many important quantum states in the adversarial scenario, including graph states, hypergraph states, and ground states of local Hamiltonians.  
However, the sample costs of these protocols are too prohibitive for any practical application. 
Recently, a general defensive QSV (DQSV) protocol was developed in Refs.~\cite{zhu2019advLett, zhu2019advPRA}. 
Thanks to this protocol, any pure state can be verified efficiently in the adversarial scenario if it can be verified efficiently in the IID scenario. 
All protocols presented in Refs.~\cite{PhysRevA.96.062321, PhysRevX.8.021060,takeuchi2019npjSerfling,zhu2019advLett,zhu2019advPRA,hayashi2015PRLverifiable,fawzi2024learning} 
first randomly choose some systems of the joint state and  perform some test on each of them. If the test results satisfy certain conditions, then one can provide a fidelity certificate for the conditional reduced state on the remaining system. 
Compared with verification protocols in Refs.~\cite{PhysRevA.96.062321, PhysRevX.8.021060, takeuchi2019npjSerfling,fawzi2024learning}, the DQSV protocol in Refs.~\cite{zhu2019advLett, zhu2019advPRA} 
can achieve a much higher sample efficiency
because it relies on a direct optimization method that avoids unnecessary relaxation or approximation.

However, the original DQSV protocol \cite{zhu2019advLett, zhu2019advPRA} can  reach a meaningful conclusion only if all  $N$ tests output successes. 
In practice, quantum states are never perfect, and even states with  high fidelities can cause failures with significant probability when $N$ is large \cite{Li2023RobustQSV}. 
To construct a robust and efficient
verification protocol, therefore, it is crucial to consider
the general case in which some failures are observed among the $N$ tests. 
This generalization was achieved recently in Ref.~\cite{Li2023RobustQSV}, which proposed an efficient DQSV protocol that can tolerate state imperfections.

In the present work, 
by virtue of the robust DQSV recipe developed in Ref.~\cite{Li2023RobustQSV}, we experimentally verify the two-qubit singlet state in the adversarial scenario. 
To this end, we employ a high-speed preparation-and-measurement apparatus connected to quantum random-number generators (QRNGs), which enables random and rapid control and measurements on each system. 
Our experiments demonstrate that the singlet can be verified robustly and efficiently 
 in the adversarial scenario with local Pauli measurements, and the efficiency is comparable to the counterpart in SQSV. Moreover, DQSV can certify fidelity reliably even if SQSV fails.
 Notably, by consuming only 1000 samples, we can verify that the prepared singlet state has at least 97\% fidelity with a 95\% confidence level.  When no failures are observed among all tests, the verification precision $\epsilon$ is inversely proportional to the number $N$ of tests employed as in the IID scenario.

\renewcommand{\figurename}{FIG.}

\begin{figure}[bt]
\centering
\includegraphics[width=0.9\columnwidth]{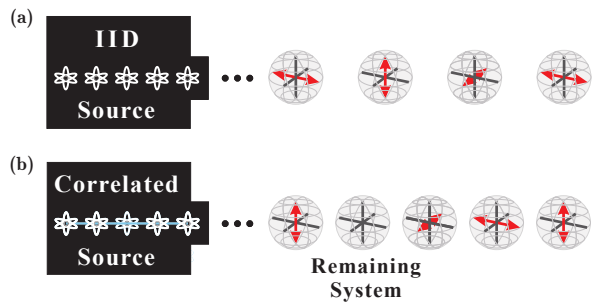}
\caption{\label{scheme}Illustration of the (a) SQSV and (b) DQSV protocols for the two-qubit singlet. (a) In SQSV, the verification strategy $\Omega$ characterized by Eq. (\ref{eq:BellStrategy}) is performed on each of the $N$ states produced, then a fidelity certificate for the unconditional reduced state on one system is constructed based on the test outcomes (under the IID assumption). Here $\Omega$ is composed of three random tests (represented by thick lines within each sphere), and the actual test chosen in each run is marked with a double arrow. (b) In DQSV, by contrast, the source may prepare a correlated or even entangled quantum state on the whole space $\mathcal{H}^{\otimes(N+1)}$. The strategy $\Omega$ is performed on each of the $N$ systems that are chosen randomly, and then a fidelity certificate for the conditional reduced state on the remaining system is constructed.}
\end{figure}

\section{Theoretical framework}
As shown in Fig.~1, both SQSV and DQSV employ a homogeneous strategy \cite{zhu2019advLett, zhu2019advPRA} based on
Pauli measurements to verify whether the prepared states are sufficiently close to the target state $|\Psi\>=(|01\>-|10\>)/\sqrt{2}\in \mathcal{H}$. The strategy is constructed as follows \cite{pallister2018PRL,zhu2019MES}. 
In each run, one performs a random test by choosing a measurement setting from $\{X \otimes X, Y \otimes Y, Z \otimes Z\}$ uniformly,  where $X, Y,$ and $ Z$ are Pauli operators,
and the test is passed if the outcome is $-1$. The resulting strategy is characterized by the verification operator  
\begin{align}\label{eq:BellStrategy}
\Omega =\frac{1}{3}\left( P_{XX}^- + P_{YY}^- +P_{ZZ}^-\right)
=|\Psi\rangle\langle\Psi|+\lambda(\openone-|\Psi\rangle\langle\Psi|),
\end{align}
where $P_{XX}^-$ is the projector onto the negative eigenspace of $X\otimes X$ (and likewise for $P_{YY}^-$ and $P_{ZZ}^-$) and $\lambda=1/3$. Denote by $\nu:=1-\lambda$ the spectral gap of $\Omega$ from the largest eigenvalue.

In addition to QSV, the homogeneous strategy in \eref{eq:BellStrategy} can also be applied to fidelity estimation \cite{zhu2019advPRA}. 
Note that the fidelity $F$ between a quantum state $\sigma$ and the target state $|\Psi\>$ can be written as 
\begin{align}\label{eq:Festimate}
F(\sigma)=\<\Psi|\sigma|\Psi\>=\frac{\tr(\Omega\sigma)-\lambda }{1-\lambda}. 
\end{align}
Hence, the true fidelity can be efficiently estimated from the
probability $\tr(\Omega\sigma)$ that $\sigma$ passes a test. In this way, we can efficiently benchmark the applicability and performance of SQSV and DQSV protocols.

\subsection{Framework of SQSV}
In the framework of SQSV, a quantum source is supposed to produce the target state $|\Psi\rangle$, but 
actually produces $N$ copies of  $\sigma$  in $N$ runs. The quality of the source can be characterized by the fidelity of the unconditional reduced state on one system, that is, the state $\sigma$, given the premise that all the copies are subject to the IID assumption. By applying the strategy $\Omega$ in each run, our goal is to verify whether  $\sigma$ is sufficiently close  to the target state.   Let $p_k(\sigma^{\otimes N})$ be the probability that we observe at most $k$ failures among the $N$ tests. 
To characterize the verification precision, for  a given  significance level $0<\delta\leq1$ and allowed number of failures
$k\in[0,N-1]$, we define the SQSV fidelity certificate as
\begin{align}\label{eq:SQSVguarteeF}
\mathcal{F}_{\lambda }^{\rm S}(k,N,\delta)&:=
\min_{\sigma} \left\{F(\sigma) \,\big|\, p_k\bigl(\sigma^{\otimes N}\bigr)\geq \delta  \right\} , 
\end{align}
where the minimization is over all quantum states $\sigma$ on $\caH$, and $F(\sigma)=\<\Psi|\sigma|\Psi\>$ is the fidelity between $\sigma$ and the target state.
Accordingly, the guaranteed infidelity is defined as $\epsilon_{\lambda }^{\,\rm S}(k,N,\delta):= 1-\mathcal{F}_{\lambda }^{\rm S}(k,N,\delta)$.
By definition, if at most $k$ failures are observed, then we can guarantee (with significance level $\delta$) that the
fidelity of the unconditional reduced
state is at least $\mathcal{F}_{\lambda }^{\rm S}(k,N,\delta)$. 
An analytical formula for $\mathcal{F}_{\lambda }^{\rm S}(k,N,\delta)$ is presented in Appendix~\ref{app:SQSV}.

\subsection{Framework of DQSV}
Next, we turn to the adversarial scenario, in which Bob may prepare a quantum state $\rho$ on the whole space $\caH^{\otimes (N+1)}$ and send it to the verifier, Alice. If Bob is honest, then he is supposed to prepare $N+1$ copies of $|\Psi\>$; while if he is malicious, then
he can mess up Alice by generating
an arbitrary correlated or even entangled state $\rho$ \cite{zhu2019advLett,zhu2019advPRA,hayashi2015PRLverifiable}. In order to obtain a reliable quantum state that is sufficiently close to $|\Psi\>$ for application, 
Alice needs to verify Bob's state by performing suitable tests on $N$ systems of $\rho$. 
If the test results satisfy certain conditions, then she concludes that the conditional reduced state on the remaining system is sufficiently close to the target state $|\Psi\>$ and can be used for application; otherwise, the state is rejected.

To achieve this verification task, our DQSV protocol runs as follows.  
After receiving the state $\rho$ prepared by Bob, Alice 
randomly chooses $N$ systems and then applies the strategy $\Omega$ constructed in \eref{eq:BellStrategy} to test each chosen system.  She accepts the reduced state $\sigma_{N+1}$ on 
the remaining system if and only if at most $k$ failures are observed among the $N$ tests. Since the $N$ systems are chosen
randomly, we can assume, without loss of generality, that $\rho$ is permutation-invariant
and the tests are applied on the first $N$ systems of $\rho$ \cite{zhu2019advLett,zhu2019advPRA}. 

The original DQSV protocol can  reach a meaningful conclusion only when  all $N$ tests are passed, that is, $k=0$ \cite{zhu2019advLett,hayashi2015PRLverifiable,zhu2019advPRA}. In practice, however, quantum states are never perfect, and even quantum states with high fidelity can  cause a few failures with significant probability when  $N$ is large \cite{Li2023RobustQSV}. To construct a robust and efficient verification protocol, therefore, it is crucial to consider the general case with $k\geq 0$.

To characterize the verification precision of DQSV, for a given  $0<\delta\leq1$ and integer 
$k\in[0,N-1]$ we define the DQSV fidelity certificate as
\begin{align}
\mathcal{F}_{\lambda }^{\rm D}(k,N,\delta)&:=
\min_{\rho} \left\{ F_k(\rho)\,|\, p_k(\rho)\geq \delta  \right\}, \label{eq:hatzeta2} 
\end{align}
where $p_k(\rho)$ is the probability that Alice observes at most $k$ failures among the $N$ tests, $F_k(\rho)=\<\Psi|\sigma_{N+1}|\Psi\>$ is the fidelity between the conditional reduced state $\sigma_{N+1}$ and the target state $|\Psi\>$, and the minimization is taken over all permutation-invariant states $\rho$ on ${\cal H}^{\otimes (N+1)}$.
Accordingly, the guaranteed infidelity reads $\epsilon_{\lambda }^{\,\rm D}(k,N,\delta):= 1-\mathcal{F}_{\lambda }^{\rm D}(k,N,\delta)$. If at most $k$ failures are observed among the $N$ tests, then Alice can ensure (with significance level~$\delta$) that the state $\sigma_{N+1}$ has fidelity at least $\mathcal{F}_{\lambda }^{\rm D}(k,N,\delta)$ with the target state $|\Psi\>$. An analytical formula for $\mathcal{F}_{\lambda }^{\rm D}(k,N,\delta)$ is presented in Appendix~\ref{app:DQSV}.

\begin{figure}[tb]
\centering
\includegraphics[width=0.95\columnwidth]{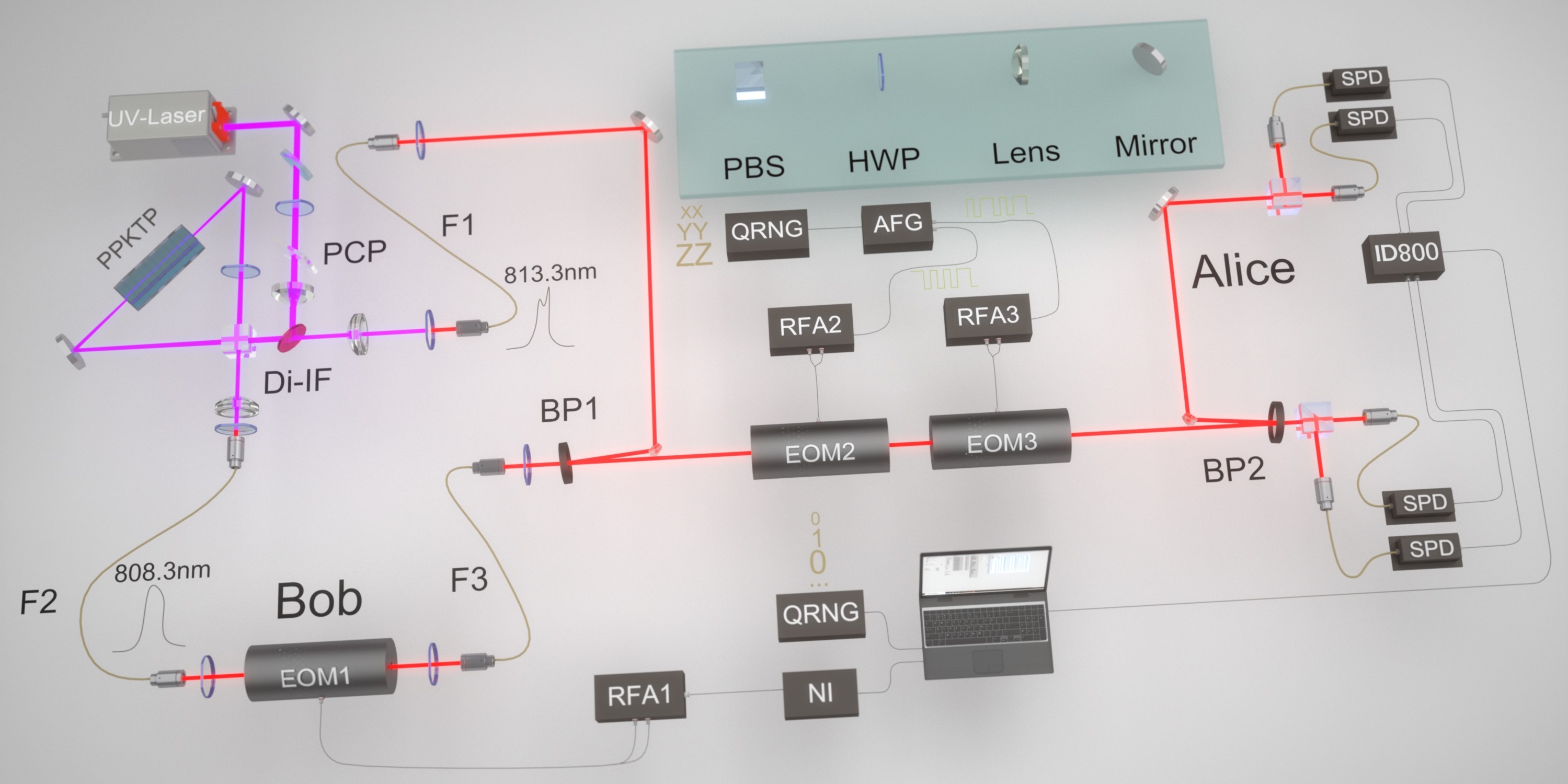}
\caption{\textbf{Setup for high-speed DQSV.} After the emission of a nondegenerate photon pair (808.3 and 813.3 nm) from an entangled-photon-pair source, Bob may undermine the entangled state by modulating the phase of the photon passing EOM1, which aims to  mislead Alice to approve the source. The driving pulses for EOM1 are generated by a digital acquisition card (NI) and controlled by a QRNG (ID Quantique), and then amplified up to a level sufficient to drive EOM1 by a radio-frequency amplifier (RFA). The nondegenerate photon pair is combined to be collinear by a bandpass filter (BP1), and then Alice can implement a random projective test on the pair with EOM2 and EOM3, both of which are driven by amplified signals from an arbitrary function generator (AFG) and controlled by another independent QRNG. The two nondegenerate photons are separated by a second bandpass filter (BP2). The final coincidence detection is executed by four single-photon detectors (SPDs) and a following ID800 to record the time stamps.}
\label{setup}
\end{figure}

\section{Experimental results}

\subsection{Experimental setup}
The experimental setup for implementing the SQSV and DQSV protocols is shown in Fig.~2. On request from Alice, Bob prepares and sends $N+1$ systems, which are all supposed to be in the singlet state, to Alice one by one, and Alice decides the local measurement on each system. To defend against Bob's attacks, Alice should execute the random measurements at a speed much higher than Bob's modulation rate to ensure that each system is measured independently and randomly. In the experiment, Bob modulates the passing photon at the speed of 1 kHz, which is much lower than Alice's switching rate at 1 MHz. The randomness of Alice's and Bob's operations is guaranteed by two QRNGs (see Appendix~\ref{app:experiment} for more details).

\begin{figure}[bt]
	\centering
\includegraphics[width=0.95\columnwidth]{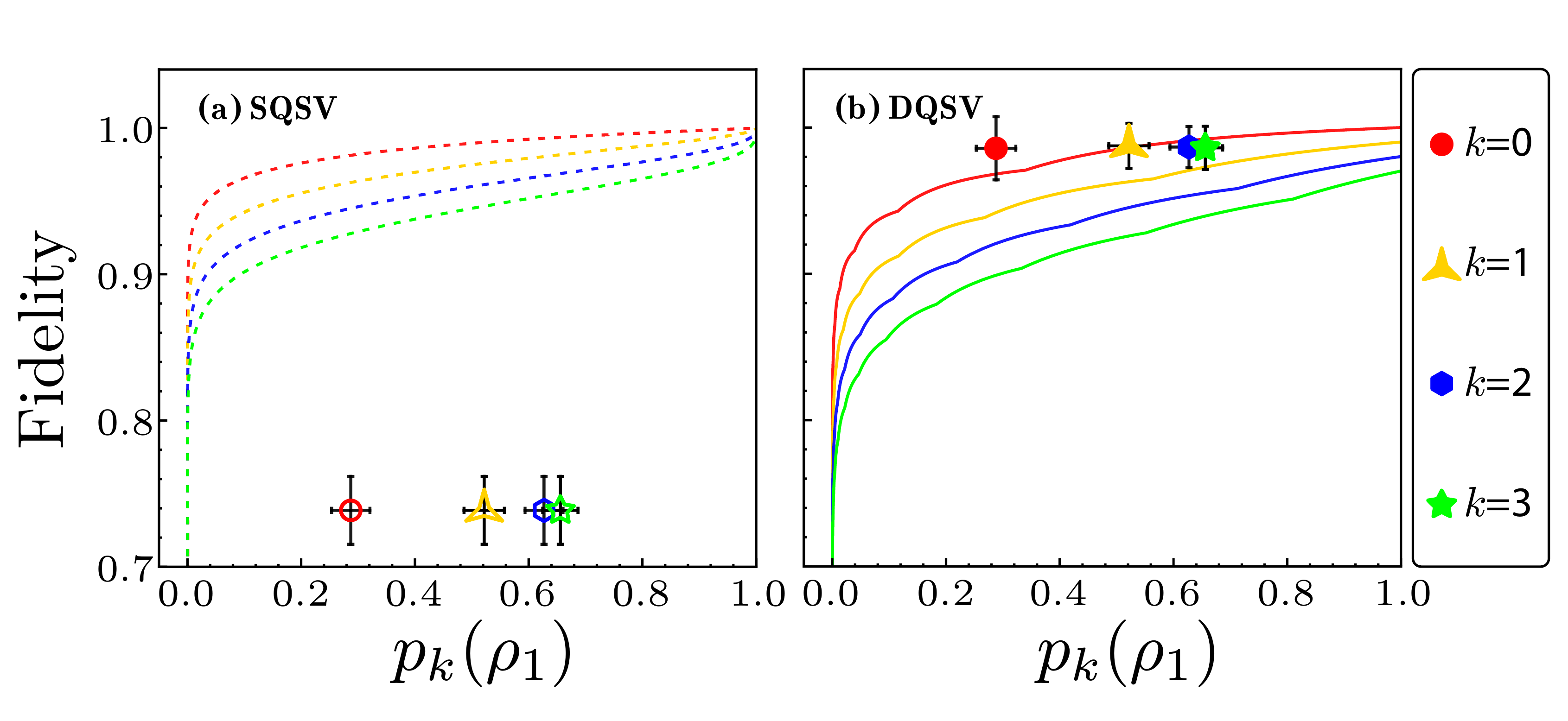}	\caption{Verification of the  correlated state $\rho_1$ in \eref{correlated} based on SQSV and DQSV protocols.  Here $k$ denotes the maximum number of failures allowed among $N=100$ tests.
(a) In SQSV, the true fidelity of  the unconditional reduced state (independent of $k$) is marked with a symbol. 
The corresponding SQSV certificate, represented by a dashed curve, is violated, which means SQSV is not applicable in this case. 
(b) In DQSV, the true fidelity of the reduced state on the remaining system conditioned on observing at most $k$ failures  is marked with a symbol. 
The corresponding DQSV certificate   represented by a solid curve offers a reliable and nearly tight lower bound. 
The error bar for each symbol (in both plots) is calculated from 40 rounds of experiments.}
\label{DQSV vs SQSV}
\end{figure}

\subsection{Verification in the non-IID scenario}
To examine whether SQSV and DQSV can provide reliable fidelity certificates in the non-IID scenario,  we apply them to verify two kinds of correlated states. 
To  benchmark the verification results,  the true fidelities are estimated using 
 \eref{eq:Festimate}.
First, we consider the case in which Bob prepares a correlated  state of the form 
\begin{align}
\label{correlated}
\rho_1=\frac{2}{3}\left( |\Psi\> \<\Psi|\right) ^{\otimes (N+1)}+\frac{1}{3}\left( \frac{\openone}{4}\right) ^{\otimes (N+1)}
\end{align}
with $N=100$.
In other words, Bob prepares $N+1$ copies of the target state $|\Psi\>$ with probability $2/3$ and 
$N+1$ copies of the completely mixed state with probability $1/3$. In the experiment,  when preparing $|\Psi\>$, the inevitable noise leads to a nonunity fidelity of about 98\%. 
In SQSV, we use the strategy $\Omega$ in \eref{eq:BellStrategy} to test each of the
first $N$ systems of $\rho_1$. 
In DQSV, we use the same strategy to test each of the $N$ systems that are chosen randomly. 
In both cases, the verification procedures are
repeated 200 times to determine the probability $p_k(\rho_1)$ of observing at most $k$ failures among the $N$ tests. 
The guaranteed fidelities provided by SQSV and DQSV are determined by Eqs.~\eqref{eq:SQSVguarteeF} and \eqref{eq:hatzeta2} with $\delta=p_k(\rho_1)$ and are shown as  curves in Figs.~3(a) and 3(b), respectively.

\begin{figure}[t]
\centering
\includegraphics[width=0.95\columnwidth]{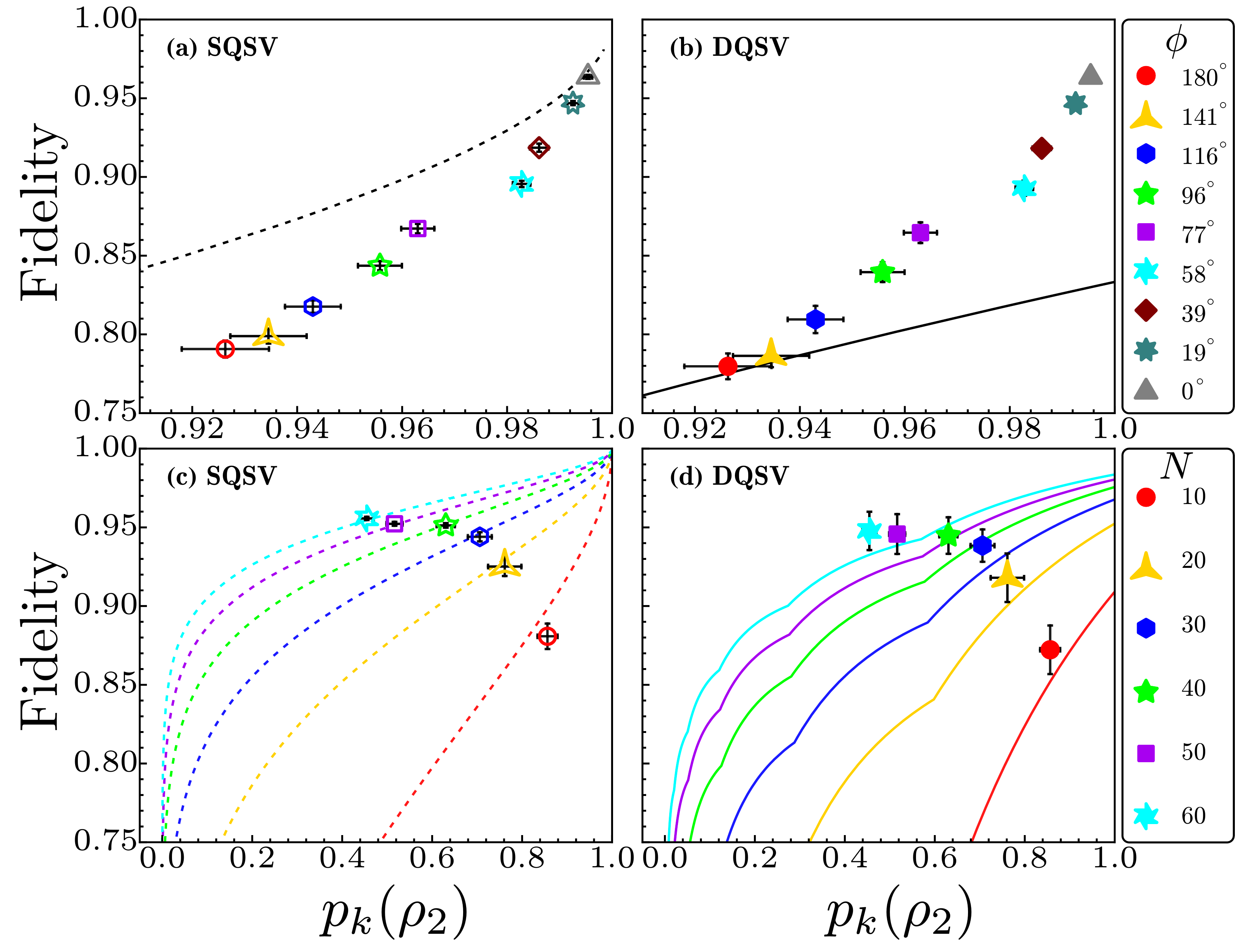}
\caption{Verification of  correlated states $\rho_2$ that have the form in \eref{correlated2} based on SQSV and DQSV protocols.  
In panels (a) and (c) [panels (b) and (d)], the true fidelities of the unconditional (conditional) reduced states are marked with symbols, while the corresponding SQSV (DQSV) certificates are represented by curves.
Again, the SQSV fidelity certificates are reliable for neither the conditional nor unconditional reduced states, and can be significantly violated. 
In contrast, the DQSV fidelity certificates (for the conditional reduced states) are always reliable and usually quite tight. Here  $N=5$ for panels (a) and (b), while $\phi=\pi$ for panels (c) and (d). For all four plots, $k=1$ and the error bar for each symbol is calculated from 40 rounds of experiments.
}
\label{defensive}
\end{figure}

As a benchmark for SQSV, the average fidelity of the first $N$ systems is estimated via \eref{eq:Festimate}, which characterizes the true fidelity of the unconditional reduced state of $\rho_1$ and is shown as a symbol in Fig.~3(a). 
It is independent of $k$ as expected, but is significantly lower than the theoretical lower bound. This fact means SQSV cannot deal with a correlated source, let alone the general adversarial scenario. By contrast, as a benchmark for our DQSV, after performing the $N$ tests, the fidelity of the reduced state on the remaining system conditioned on observing at most $k$ failures is estimated via \eref{eq:Festimate} and shown as a symbol in Fig.~3(b). This fidelity tends to decrease with $k$, but the variation is almost invisible for small $k$ because such events are dominated by a noisy version of the  term $( |\Psi\> \<\Psi|) ^{\otimes (N+1)}$ in \eref{correlated}. 
Importantly, this fidelity is  slightly higher than the corresponding fidelity certificate, which means  DQSV can provide a reliable and nearly tight certificate.

To further corroborate the necessity of DQSV in the adversarial scenario, next, we consider the case in which Bob prepares a 
 permutation-invariant quantum state of the form
\begin{equation} \label{correlated2}
\rho_2=\frac{1}{N+1}\sum_l\mathscr{P}_l \left\lbrace (|\Psi\rangle\langle\Psi|)^{\otimes N}\otimes |\Psi(\phi)\rangle\langle\Psi(\phi)| \right\rbrace ,
\end{equation}
where $\sum_l\mathscr{P}_l \{\cdot\}$ denotes the sum over all distinct permutations. Here $|\Psi(\phi)\>=(|01\>-e^{i\phi}|10\>)/\sqrt{2}$ is generated by applying an extra phase $\phi$ between the horizontal polarization and vertical polarization and is randomly placed into the sequence of $N+1$ systems.
During the experiment, we apply both SQSV and DQSV protocols to construct  fidelity certificates for various choices of $N$ and $\phi$. To reduce the statistical fluctuation in determining  $p_{k}(\rho_2)$, the verification procedure in each case is repeated until the event of observing at most $k$ failures occurs 1000 times.  
As shown in Fig.~4, the certificates provided by SQSV are often higher than the measured true fidelities, not only for the unconditional reduced states but also for the conditional reduced states on the remaining system, indicating the inability of SQSV to handle correlated sources.
In contrast, the fidelity certificates provided by DQSV are consistently slightly lower than the measured true fidelities of the conditional reduced states. 
This result further demonstrates that DQSV can overcome the limitation of SQSV and can provide reliable and informative fidelity certificates in the adversarial scenario.

\begin{figure}[t]
\centering
\includegraphics[width=0.95\columnwidth]{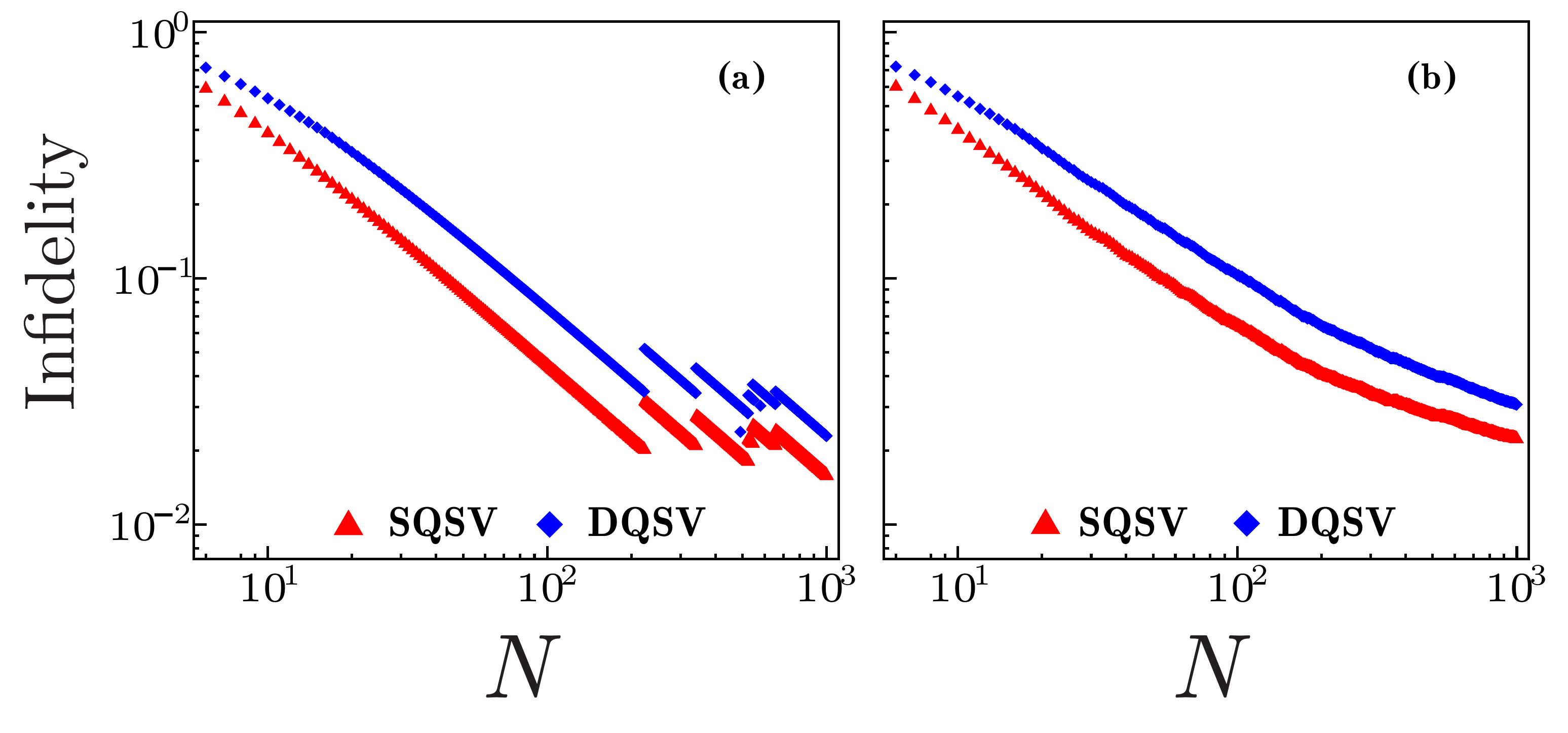}
\caption{\label{efficiency}Verification precisions achieved by SQSV and DQSV when Bob is honest. Here the significance level $\delta$ is set to be 0.05, and the guaranteed infidelity $\epsilon$ is  plotted versus $N$ (in log-log scale). The blue squares (red triangles)  represent the infidelity certificates $\epsilon_{\lambda}^{\,\rm D}(k,N,\delta)$ $\left[ \epsilon_{\lambda}^{\,\rm S}(k,N,\delta)\right]$ guaranteed by DQSV (SQSV). Panel (a) shows the results for a single round, while panel (b) shows the results averaged over  80 rounds. Both $\epsilon_{\lambda}^{\,\rm D}$ and 
$\epsilon_{\lambda}^{\,\rm S}$ 
scale as $N^{-1}$ within the first 100 tests. }
\end{figure}

\subsection{Verification efficiency}
Finally, we  demonstrate the efficiency of DQSV in comparison with SQSV when Bob is honest. In this case, Bob tries to prepare the ideal singlet state $|\Psi\>$ (the target state) in each run, but the actual state prepared may not be perfect due to inevitable noise
 and other imperfections. The performance of the  SQSV and DQSV protocols are characterized by the scalings of the guaranteed infidelities $\epsilon_{\lambda}^{\,\rm S}(k,N,\delta)$ and
$\epsilon_{\lambda}^{\,\rm D}(k,N,\delta)$  with respect to the number $N$ of tests.  Here $k$ is the number of failures actually observed among the $N$ tests, while   $\epsilon_{\lambda}^{\,\rm S}(k,N,\delta)$  and $\epsilon_{\lambda}^{\,\rm D}(k,N,\delta)$ are determined by Proposition~\ref{prop:IIDguar} in Appendix~\ref{app:SQSV} and Proposition~\ref{prop:AdvFidelity} in Appendix~\ref{app:DQSV}, respectively.
In the experiment, the significance level $\delta$ is set to be 0.05, and 1000 tests in total are performed in each round.  The experimental  results for a single round are shown in Fig.~\ref{efficiency}~(a), while the averages over 80 rounds are shown in Fig.~\ref{efficiency}~(b). These results indicate that both $\epsilon_{\lambda}^{\,\rm D}$ and 
$\epsilon_{\lambda}^{\,\rm S}$ scale as $N^{-1}$ within the first 100 tests, and eventually they are determined by the failure rate  $k/N$ and the parameter $\lambda$. 
Notably,  $\epsilon_{\lambda}^{\,\rm D}$ drops below 0.03 within 1000 tests 
for both the single-round and average results, which is comparable to the counterpart in SQSV. In this way, our experiments demonstrate that DQSV can achieve a similar high efficiency as SQSV although its underlying assumption is much weaker. 
Incidentally, previous verification protocols applicable to the adversarial scenario \cite{fawzi2024learning,PhysRevX.8.021060,PhysRevA.96.062321,hayashi2015PRLverifiable,takeuchi2019npjSerfling} are too resource-intensive to be implemented in experiments.

\section{Summary}
Using a photonic platform, we experimentally realized a robust DQSV protocol for the singlet state in the adversarial scenario using local Pauli measurements. The experimental results show that our DQSV protocol can successfully defend against various attacks from a correlated or even malicious source, while the SQSV protocol may fail. Moreover, our DQSV protocol is robust to imperfections in state preparation and can achieve a high efficiency that is comparable to that of the  SQSV protocol.  Our work demonstrates that the DQSV protocol is a powerful tool for verifying quantum states in the adversarial scenario, 
which is free of the IID assumption. It may play important roles in various quantum information processing tasks that require high-fidelity state preparation, but suffer from  correlated noise or untrustworthy sources.

\section*{Acknowledgments}
This work is supported by the Innovation Program for Quantum Science and Technology (Grant No. 2021ZD0301200), the National Natural Science Foundation of China (Grants No. 12350006 and No. 62205326), the USTC Research Funds of the Double First-Class Initiative (Grant No. YD2030002026), and the Natural Science Foundation of Anhui Province of China (Grant No. 2408085Y002). The work at Fudan University is supported by Shanghai Science and Technology Innovation Action Plan (Grant No.~24LZ1400200), Shanghai
Municipal Science and Technology Major Project
(Grant No.~2019SHZDZX01), the Innovation Program for Quantum Science and Technology (Grant No. 2024ZD0300101), and the National Key Research and Development Program
of China (Grant No.~2022YFA1404204).

\begin{appendix}
\numberwithin{proposition}{section}
\numberwithin{lemma}{section}
\renewcommand{\theproposition}{\thesection\arabic{proposition}}
\renewcommand{\thelemma}{\thesection\arabic{lemma}}

\section{Fidelity certificate in SQSV}\label{app:SQSV}
As mentioned in the main text, in SQSV, the verifier applies the strategy $\Omega$ on $N$ copies of the  state $\sigma$ produced.
Let $\epsilon_\sigma:=1-\<\Psi|\sigma|\Psi\>$ be the infidelity between $\sigma$ and the target state $|\Psi\rangle \in \mathcal{H}$. 
Then the maximum probability that $\sigma$ can pass a test on average is given by
\cite{pallister2018PRL,zhu2019advLett}  \begin{equation}\label{eq:NAPrPass1test}
\max_{\<\Psi|\tau|\Psi\> \leq  1-\epsilon_\sigma} \tr(\Omega \tau) =  1- \nu \epsilon_\sigma,
\end{equation}
where the maximization runs over all quantum states $\tau$ on $\caH$ that satisfy $\<\Psi|\tau|\Psi\> \leq  1-\epsilon_\sigma$, and $\nu=1-\lambda$ denotes the spectral gap of $\Omega$ from the largest eigenvalue \cite{pallister2018PRL,zhu2019advLett}.  The maximum is attained when $\tau$ is supported in the subspace associated with the largest and second largest eigenvalues of $\Omega$.

For $0\leq p \leq 1$ and integers $z,k\geq 0$, define 
\begin{align}\label{eq:binomCFD}
B_{z,k}(p):= \sum_{j=0}^k {z \choose j} p^j (1-p)^{z-j}.
\end{align}
Here it is understood that $x^0=1$ even if $x=0$.
Then the probability that we observe at most $k$ failures  among the $N$ tests ($0\leq k\leq N-1$) can be expressed as  
\begin{align}\label{eq:pkleq}
p_k\bigl(\sigma^{\otimes N}\bigr)
&=\sum_{j=0}^k {N \choose j} [1-\tr(\Omega \sigma)]^j [\tr(\Omega \sigma)]^{N-j}
\nonumber\\
&=B_{N,k}[1-\tr(\Omega \sigma)]
\leq B_{N,k}(\nu\epsilon_\sigma),
\end{align}
where the inequality follows from \eref{eq:NAPrPass1test} above and \lref{lem:Bzkmono} below, and can be saturated when $\sigma$ is supported in the subspace associated with the largest and second largest eigenvalues of $\Omega$.

As long as the number of failures among the $N$ tests is not larger than $k$, one can guarantee (with significance level $\delta$) that the fidelity of $\sigma$ is at least $\mathcal{F}_{\lambda }^{\rm S}(k,N,\delta)$,  where 
\begin{align}
\mathcal{F}_{\lambda }^{\rm S}(k,N,\delta)&:=
\min_{\sigma} \left\{\<\Psi|\sigma|\Psi\> \,\big|\, p_k\bigl(\sigma^{\otimes N}\bigr)\geq \delta  \right\} 
\end{align}
is the guaranteed fidelity of SQSV [cf.~\eref{eq:SQSVguarteeF} in the main text]. An analytical formula for $\mathcal{F}_{\lambda }^{\rm S}(k,N,\delta)$ is presented in the following proposition.

\begin{proposition}\label{prop:IIDguar}
Suppose $0\leq \lambda<1$, $0<\delta\leq1$, and 
integers $k \geq 0$ and $N\geq k+1$. Then we have
\begin{align}
\mathcal{F}_{\lambda}^{\rm S}(k,N,\delta)
&= 1- \nu^{-1} J(N,k,\delta), 
\end{align}
where $J(N,k,\delta)$ is the unique solution of $x$ to the equation
\begin{align}
B_{N,k}(x)=\delta,\quad 0\leq x\leq 1. 
\end{align}
\end{proposition}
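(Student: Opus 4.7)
The plan is to reduce the constrained minimization in \eref{eq:SQSVguarteeF} to a scalar optimization in the infidelity $\epsilon_\sigma:=1-\langle\Psi|\sigma|\Psi\rangle$ and then to invert the binomial tail $B_{N,k}$. First I would combine the two facts established immediately above the proposition statement: the identity $p_k(\sigma^{\otimes N})=B_{N,k}[1-\tr(\Omega\sigma)]$ together with the tight upper bound $p_k(\sigma^{\otimes N})\leq B_{N,k}(\nu\epsilon_\sigma)$ of \eref{eq:pkleq}, both of which follow from \eref{eq:NAPrPass1test} and the monotonicity of $B_{N,k}$ asserted in \lref{lem:Bzkmono}. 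These convert the operator constraint $p_k(\sigma^{\otimes N})\geq\delta$ into the scalar constraint $B_{N,k}(\nu\epsilon_\sigma)\geq\delta$.

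Next I would use that, under the hypothesis $k\leq N-1$, the function $B_{N,k}:[0,1]\to[0,1]$ is continuous with $B_{N,k}(0)=1$, $B_{N,k}(1)=0$, and strictly decreasing on the whole interval (by \lref{lem:Bzkmono}, or by computing $B_{N,k}'(p)=-N\binom{N-1}{k}p^{k}(1-p)^{N-1-k}$). The intermediate value theorem then produces a unique $J(N,k,\delta)\in[0,1]$ with $B_{N,k}(J(N,k,\delta))=\delta$, and strict monotonicity turns the scalar constraint into the linear inequality $\nu\epsilon_\sigma\leq J(N,k,\delta)$. Maximizing $\epsilon_\sigma$ subject to this bound immediately yields the lower bound $\mathcal{F}_{\lambda}^{\rm S}(k,N,\delta)\geq 1-\nu^{-1}J(N,k,\delta)$.

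The final step, which I view as the main obstacle, is to prove this lower bound is attained. I would exhibit the explicit state
\begin{align}
\sigma_\star=(1-\epsilon_\star)|\Psi\rangle\langle\Psi|+\epsilon_\star|\Phi\rangle\langle\Phi|,\quad \epsilon_\star:=\nu^{-1}J(N,k,\delta),
\end{align}
where $|\Phi\rangle$ is any unit vector orthogonal to $|\Psi\rangle$. By the explicit form of $\Omega$ in \eref{eq:BellStrategy}, $|\Psi\rangle$ is the unique eigenvector with eigenvalue $1$ while $\Omega$ acts as $\lambda$ on the orthogonal complement, so $\tr(\Omega\sigma_\star)=(1-\epsilon_\star)+\lambda\epsilon_\star=1-\nu\epsilon_\star$. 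This is precisely the equality case of the bound that produced \eref{eq:pkleq}, so $p_k(\sigma_\star^{\otimes N})=B_{N,k}(\nu\epsilon_\star)=B_{N,k}(J(N,k,\delta))=\delta$, meaning $\sigma_\star$ is feasible and attains fidelity exactly $1-\nu^{-1}J(N,k,\delta)$. The only detail to check is the corner case $\nu^{-1}J(N,k,\delta)>1$, which corresponds to the trivial guarantee $\mathcal{F}_{\lambda}^{\rm S}=0$ and is saturated by any state orthogonal to $|\Psi\rangle$; in the nontrivial regime $\epsilon_\star\in[0,1]$ and the construction above closes the proof.
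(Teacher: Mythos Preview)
Your proposal is correct and follows essentially the same route as the paper: you use \eref{eq:pkleq} and the monotonicity in \lref{lem:Bzkmono} to reduce the operator constraint to the scalar inequality $\nu\epsilon_\sigma\leq J(N,k,\delta)$, and then exhibit a saturating state supported on the top two eigenspaces of $\Omega$. Your write-up is slightly more explicit than the paper's (you give the mixture $\sigma_\star$ concretely, spell out the intermediate value theorem, and flag the corner case $\nu^{-1}J(N,k,\delta)>1$), but the logical structure and the key ingredients are identical.
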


The function $J(N,k,\delta)$ above is well-defined because 
$B_{N,k}(x)$ is continuous and strictly decreasing in $x$ for $0\leq x\leq1$,   
and $B_{N,k}(0)=1\geq\delta>0=B_{N,k}(1)$.

\begin{proof}[Proof of Proposition~\ref{prop:IIDguar}]
By definition we have
\begin{align}\label{eq:SQSVFLBproofEq1}
\mathcal{F}_{\lambda}^{\rm S}(k,N,\delta)
&= 1-\max_{\sigma} \left\{ \epsilon_\sigma \big|\, p_k\bigl(\sigma^{\otimes N}\bigr)\geq \delta  \right\}. 
\end{align}
In addition, we can derive the following result:
\begin{align}\label{eq:bound1}
\max_{\sigma} \left\{ \epsilon_\sigma \big|\, p_k\bigl(\sigma^{\otimes N}\bigr)\geq \delta  \right\}
&\leq \max_{\sigma} \left\{ \epsilon_\sigma \big| B_{N,k}(\nu\epsilon_\sigma)\geq \delta  \right\}
\nonumber\\
&= \nu^{-1} J(N,k,\delta),  
\end{align}
where the inequality follows from \eref{eq:pkleq}, 
and the equality holds because $B_{N,k}(x)$ is continuous and strictly decreasing in $x$ for $0\leq x\leq1$ according to \lref{lem:Bzkmono} below.

On the other hand, if the state $\sigma$ is supported in the subspace associated with the largest and second largest eigenvalues of $\Omega$ and $\epsilon_\sigma=\nu^{-1} J(N,k,\delta)$, then by \eref{eq:pkleq} we have 
\begin{align} 
p_k\bigl(\sigma^{\otimes N}\bigr)
=B_{N,k}(\nu\epsilon_\sigma)
=B_{N,k}(J(N,k,\delta))
=\delta, 
\end{align}
which implies that 
\begin{align}\label{eq:bound2}
\max_{\sigma} \left\{ \epsilon_\sigma \big|\, p_k\bigl(\sigma^{\otimes N}\bigr)\geq \delta  \right\}
\geq \nu^{-1} J(N,k,\delta). 
\end{align}
Equations~\eqref{eq:SQSVFLBproofEq1}, \eqref{eq:bound1}, and \eqref{eq:bound2} above together confirm Proposition~\ref{prop:IIDguar}. 
\end{proof}

\begin{lemma}[Lemma 3.2, \cite{zhu2022nearly}]\label{lem:Bzkmono}
	Suppose integers $k,z \geq 0$, $0 \leq k \leq z$, and $0<p<1$. Then $B_{z, k}(p)$ is strictly increasing in $k$, strictly decreasing in $z$, and
	nonincreasing in $p$.
	In addition, $B_{z, k}(p)$ is strictly decreasing in $p$ when $k<z$.
\end{lemma}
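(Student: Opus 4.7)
The plan is to treat each of the three monotonicity claims separately, in each case producing an explicit closed-form expression for the relevant difference or derivative so that positivity/negativity becomes transparent. Throughout, the range $0<p<1$ and $0\leq k\leq z$ ensures that the leading combinatorial factors such as $\binom{z}{k}$ are strictly positive, which is what drives the strict inequalities.

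First I would handle strict monotonicity in $k$, which is essentially immediate: from the definition \eqref{eq:binomCFD},
\begin{equation*}
B_{z,k+1}(p)-B_{z,k}(p)=\binom{z}{k+1}p^{k+1}(1-p)^{z-k-1},
\end{equation*}
and for $0\leq k\leq z-1$ and $0<p<1$ this single binomial term is strictly positive. Iterating gives strict increase for all pairs $k<k'$ in $\{0,\dots,z\}$.

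Next I would handle strict monotonicity in $z$ by deriving a short recurrence via Pascal's rule $\binom{z+1}{j}=\binom{z}{j}+\binom{z}{j-1}$. Splitting the sum for $B_{z+1,k}(p)$ and reindexing the second half gives
\begin{equation*}
B_{z+1,k}(p)=(1-p)B_{z,k}(p)+pB_{z,k-1}(p),
\end{equation*}
with the convention $B_{z,-1}(p)=0$. Subtracting $B_{z,k}(p)$ and using the telescoping identity from the first step yields
\begin{equation*}
B_{z+1,k}(p)-B_{z,k}(p)=-p\binom{z}{k}p^{k}(1-p)^{z-k},
\end{equation*}
which is strictly negative for $0<p<1$ and $0\leq k\leq z$, establishing $B_{z+1,k}(p)<B_{z,k}(p)$.

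Finally I would tackle the $p$-monotonicity, which I expect to be the main obstacle because it requires a telescoping argument and careful indexing. Differentiating each summand $f_j(p)=\binom{z}{j}p^j(1-p)^{z-j}$ and applying the absorption identities $j\binom{z}{j}=z\binom{z-1}{j-1}$ and $(z-j)\binom{z}{j}=z\binom{z-1}{j}$ gives
\begin{equation*}
f_j'(p)=z\!\left[\binom{z-1}{j-1}p^{j-1}(1-p)^{z-j}-\binom{z-1}{j}p^{j}(1-p)^{z-1-j}\right].
\end{equation*}
Summing over $j=0,\dots,k$ telescopes (since the positive term at index $j+1$ cancels the negative term at index $j$) to
\begin{equation*}
\frac{d}{dp}B_{z,k}(p)=-z\binom{z-1}{k}p^{k}(1-p)^{z-1-k}.
\end{equation*}
If $k=z$ then $\binom{z-1}{z}=0$ and the derivative vanishes (consistent with $B_{z,z}(p)\equiv 1$), yielding nonincreasing behavior; if $k<z$ then the derivative is strictly negative for $0<p<1$, yielding strict decrease. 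The main risk is an off-by-one slip in the telescoping, so I would double-check by comparing the endpoints $B_{z,k}(0)=1$ and $B_{z,k}(1)=0$ (for $k<z$), which are consistent with the derivative above integrating to $-1$ on $[0,1]$.
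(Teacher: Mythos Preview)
Your argument is correct in all three parts. The recurrence $B_{z+1,k}(p)=(1-p)B_{z,k}(p)+pB_{z,k-1}(p)$ is derived cleanly via Pascal's rule, and the telescoping computation of $\frac{d}{dp}B_{z,k}(p)=-z\binom{z-1}{k}p^{k}(1-p)^{z-1-k}$ is accurate, with the correct handling of the boundary case $k=z$ (where $\binom{z-1}{z}=0$). The sanity check with $B_{z,k}(0)=1$ and $B_{z,k}(1)=0$ is a nice touch.

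As for comparison: the paper does not supply its own proof of this lemma; it is quoted verbatim as Lemma~3.2 of the cited reference~\cite{zhu2022nearly} and used as a black box in the proof of Proposition~\ref{prop:IIDguar}. Your self-contained elementary proof therefore goes beyond what the present paper provides, and is exactly the kind of direct combinatorial argument one would expect for such a statement about the binomial cumulative distribution function.
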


\section{Fidelity certificate in DQSV}\label{app:DQSV}
As mentioned in the main text, in DQSV, Alice applies  the homogeneous strategy $\Omega$ to test $N$ systems of the received state $\rho$, and accepts the reduced state on the remaining system  if at most $k$ failures are observed among the $N$ tests.
Since $\rho$ can be assumed to be permutation-invariant without loss of generality, the probability that Alice accepts $\rho$ reads 	\cite{Li2023RobustQSV}
\begin{align}\label{eq:pnmkrho}
p_k(\rho)=
\sum_{i=0}^k {N \choose i}
\tr \! \big(\big[\Omega^{\otimes (N-i)}\otimes \bar{\Omega}^{\otimes i} \otimes \openone\big] \rho\big),
\end{align}
where $\bar{\Omega}:=\openone-\Omega$.
Let $\sigma_{N+1}$ be the reduced state on the remaining system conditioned on observing at most $k$ failures. If $p_k(\rho)>0$, then the fidelity between $\sigma_{N+1}$ and the target state $|\Psi\>$ reads \cite{Li2023RobustQSV}
\begin{align}
F_k(\rho) = \<\Psi|\sigma_{N+1}|\Psi\>=\frac{f_k(\rho)}{p_k(\rho)},
\end{align}
where	
\begin{align}\label{eq:fnmkrho}
\begin{split}
    f_k(\rho) =  
    \sum_{i=0}^k {N \choose i} \tr \! \big(\big[
    \Omega^{\otimes (N-i)} \otimes \bar{\Omega}^{\otimes i} 
    \otimes |\Psi\>\<\Psi|\big] \rho \big).
\end{split}
\end{align}

As long as the number of failures among the $N$ tests is not larger than $k$, Alice can guarantee (with significance level $\delta$) that the fidelity of the reduced state on the remaining system is at least $\mathcal{F}_{\lambda }^{\rm D}(k,N,\delta)$,  where 
\begin{align}
\mathcal{F}_{\lambda }^{\rm D}(k,N,\delta)&:=
\min_{\rho} \left\{F_k(\rho) \,|\, p_k(\rho)\geq \delta  \right\}  
\end{align}
is the guaranteed fidelity of DQSV [cf.~\eref{eq:hatzeta2}  in the main text]. 
In the following, we provide an analytical formula for $\mathcal{F}_\lambda^{\rm D}(k,N,\delta)$ by virtue of the results in Ref.~\cite{Li2023RobustQSV}.

For integer $0\leq z\leq N+1$, define
\begin{widetext}
\begin{align}
h_z(k,N,\lambda) &:=
\begin{cases}
	1 & 0\leq z \le k, \\[0.5ex]
			\frac{(N-z+1) B_{z,k}(\nu) +z B_{z-1,k}(\nu) }{N+1} & k+1 \leq z \leq N+1,
\end{cases}
		\label{eq:hzHomo} \\
		g_z(k,N,\lambda)&:=
		\begin{cases}
			\frac{N-z+1}{N+1} &  0\leq z \le k, \\[1ex]
			\frac{(N-z+1)B_{z,k}(\nu)}{N+1}   &  k+1 \leq z \leq N+1. 
		\end{cases}
		\label{eq:gzHomo}
	\end{align}
\end{widetext}	
	
	\begin{lemma}[Lemma S1, \cite{Li2023RobustQSV}]\label{lem:gzhzMono} 
		Suppose $0<\lambda<1$, integers $k,z,N \geq 0$, and $N \geq k+1$.
		Then 
		$h_z(k,N,\lambda)$ is strictly decreasing in $z$ for $k\leq z\leq N+1$, and
		$g_z(k,N,\lambda)$ is strictly decreasing in $z$ for $0\leq z\leq N+1$.
	\end{lemma}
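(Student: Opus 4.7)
The plan is to treat the two monotonicity claims separately, since the argument for $g_z$ is largely elementary while the argument for $h_z$ hinges on a clever algebraic rewriting.

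For $g_z$ I would split into the three natural regimes. On $\{0,\ldots,k\}$, $g_z=(N-z+1)/(N+1)$ is manifestly strictly decreasing in $z$. On $\{k+1,\ldots,N+1\}$, the product $(N-z+1)B_{z,k}(\nu)$ is the product of a strictly decreasing nonnegative prefactor and a strictly decreasing positive factor, the latter by \lref{lem:Bzkmono} applied to $B_{z,k}(\nu)$ in the variable $z$; at the right endpoint one just checks $g_N>0=g_{N+1}$. Finally, strict decrease across the junction $z=k\to z=k+1$ follows from a direct computation using $B_{k+1,k}(\nu)=1-\nu^{k+1}$, which gives $g_k-g_{k+1}=[1+(N-k)\nu^{k+1}]/(N+1)>0$.

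For $h_z$ the key step is to rewrite the piecewise expression for $k+1\leq z\leq N+1$ as a convex combination of two binomial CDFs with $z$-independent weights. Concretely, I would first use the telescoping identity $B_{z-1,k}(\nu)-B_{z,k}(\nu)=\nu\binom{z-1}{k}\nu^k(1-\nu)^{z-1-k}$ to replace $zB_{z-1,k}(\nu)$ by $zB_{z,k}(\nu)+z\nu\binom{z-1}{k}\nu^k(1-\nu)^{z-1-k}$, then apply the combinatorial identity $z\binom{z-1}{k}=(k+1)\binom{z}{k+1}$ together with $\binom{z}{k+1}\nu^{k+1}(1-\nu)^{z-k-1}=B_{z,k+1}(\nu)-B_{z,k}(\nu)$. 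After using $(N-z+1)+z=N+1$ to absorb the remaining $B_{z,k}(\nu)$ terms, this should yield
\begin{align*}
h_z(k,N,\lambda)=\frac{(N-k)B_{z,k}(\nu)+(k+1)B_{z,k+1}(\nu)}{N+1}.
\end{align*}
Once this identity is in hand the monotonicity is immediate: both coefficients are strictly positive since $N\geq k+1$, and both $B_{z,k}(\nu)$ and $B_{z,k+1}(\nu)$ are strictly decreasing in $z$ on the relevant range by \lref{lem:Bzkmono}. For the transition $z=k\to z=k+1$, where $h_k=1$ by definition, the same identity gives $h_{k+1}=1-(N-k)\nu^{k+1}/(N+1)<1$.

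The main obstacle I anticipate is discovering the correct algebraic reformulation of $h_z$. A naive termwise computation of $h_z-h_{z+1}$ produces a linear combination of three binomial CDFs whose differences do not obviously have a definite sign, and no amount of brute-force binomial manipulation seems to clean it up. Spotting that the coefficients $(N-z+1)$ and $z$ in the definition conspire, via the Pascal-type identity $z\binom{z-1}{k}=(k+1)\binom{z}{k+1}$, to produce a convex combination with $z$-independent weights is the crux; once seen, the proof reduces to a direct invocation of \lref{lem:Bzkmono}.
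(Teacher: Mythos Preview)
The paper does not actually prove this lemma; it is imported verbatim from Ref.~\cite{Li2023RobustQSV} (as Lemma~S1 therein) and stated without argument. There is therefore no in-paper proof to compare against.

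That said, your proposal is correct and self-contained. The key identity
\[
h_z(k,N,\lambda)=\frac{(N-k)\,B_{z,k}(\nu)+(k+1)\,B_{z,k+1}(\nu)}{N+1}\qquad(k+1\leq z\leq N+1)
\]
checks out: the telescoping step $B_{z-1,k}(\nu)-B_{z,k}(\nu)=\binom{z-1}{k}\nu^{k+1}(1-\nu)^{z-1-k}$ follows from the recurrence $B_{z,k}=(1-\nu)B_{z-1,k}+\nu B_{z-1,k-1}$, and then $z\binom{z-1}{k}=(k+1)\binom{z}{k+1}$ converts the residual term into $(k+1)\bigl[B_{z,k+1}(\nu)-B_{z,k}(\nu)\bigr]$, after which the $(N-z+1)+z=N+1$ collapse gives the stated form. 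With the weights $z$-independent and strictly positive (using $N\geq k+1$), \lref{lem:Bzkmono} delivers strict monotonicity of each summand on the relevant range, and your boundary computations at $z=k\to k+1$ and at $z=N\to N+1$ are correct. The argument for $g_z$ is likewise sound in all three regimes.
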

	
	By this lemma, we have $B_{N,k}(\nu)=h_{N+1}(k,N,\lambda)\leq h_z(k,N,\lambda)\leq 1$ for $0\leq z\leq N+1$.
	Hence, for $B_{N,k}(\nu)<\delta\leq 1$, we can define $\hat{z}$ as the largest integer $z$ such that
	$h_z(k,N,\lambda)\geq\delta$.
	For $z\in\{k,k+1,\dots,N\}$, define
	\begin{align}
	\kappa_z(k,N,\delta,\lambda)
		:=\;& \frac{\delta-h_{z+1}(k,N,\lambda)}{h_z(k,N,\lambda)-h_{z+1}(k,N,\lambda)}, \label{eq:kappaz} \\
		\tilde{\zeta}_\lambda(k,N,\delta,z)
		:=\;& [1-\kappa_z(k,N,\delta,\lambda)]g_{z+1}(k,N,\lambda) \nonumber\\
		    &+\kappa_z(k,N,\delta,\lambda)g_z(k,N,\lambda).  \label{eq:tildezeta1}
	\end{align}
The exact value of $\mathcal{F}_\lambda^{\rm D}(k,N,\delta)$ is determined by Proposition~\ref{prop:AdvFidelity} below, which follows from Theorem S1 in Ref.~\cite{Li2023RobustQSV}.
	
\begin{proposition}\label{prop:AdvFidelity}
Suppose $0<\lambda<1$, $0<\delta\leq1$, and integers $k \geq 0$ and $N\geq k+1$. Then we have
\begin{align}
&\mathcal{F}_\lambda^{\rm D}(k,N,\delta)
=
\begin{cases}
0                             & \delta \leq B_{N,k}(\nu), \\
\tilde{\zeta}_\lambda(k,N,\delta,\hat{z})/\delta  \  & \delta > B_{N,k}(\nu).
\end{cases}
\label{eq:True-plot}
\end{align}
\end{proposition}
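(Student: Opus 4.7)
The plan is to reduce the infinite-dimensional optimization defining $\mathcal{F}_\lambda^{\rm D}(k,N,\delta)$ to a finite linear-fractional program indexed by the integer $z$ that counts how many of the $N+1$ systems lie in the orthogonal complement of $|\Psi\>$, and then solve that reduced program explicitly via Lemma~\ref{lem:gzhzMono}. This is essentially how Theorem~S1 of Ref.~\cite{Li2023RobustQSV} specializes to the homogeneous strategy in \eref{eq:BellStrategy}.

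First, I would exploit the spectral structure $\Omega = |\Psi\>\<\Psi| + \lambda(\openone - |\Psi\>\<\Psi|)$, from which $\bar\Omega = \nu(\openone - |\Psi\>\<\Psi|)$. Combined with the permutation invariance of $\rho$, this ensures that $p_k(\rho)$ and $f_k(\rho)$ depend only on how many of the $N+1$ tensor factors the state places in the $|\Psi\>$ direction versus in its orthogonal complement. A direct combinatorial calculation for a permutation-symmetric state with exactly $z$ systems supported in the worst-case subspace orthogonal to $|\Psi\>$ then yields $p_k = h_z(k,N,\lambda)$ and $f_k = g_z(k,N,\lambda)$, matching Eqs.~\eqref{eq:hzHomo} and \eqref{eq:gzHomo}. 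By convexity, a general permutation-invariant $\rho$ induces a probability distribution $\{q_z\}_{z=0}^{N+1}$ on the $z$-sectors with $p_k(\rho) = \sum_z q_z h_z$ and $f_k(\rho) = \sum_z q_z g_z$.

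Next, I would solve the reduced linear-fractional program
\begin{align}
\min_{\{q_z\}}\;\frac{\sum_z q_z g_z}{\sum_z q_z h_z}\quad\text{s.t.}\quad\sum_z q_z h_z \geq \delta,\;q_z\geq 0,\;\sum_z q_z = 1,
\end{align}
by case analysis on $\delta$. When $\delta \leq h_{N+1} = B_{N,k}(\nu)$, the choice $q_{N+1}=1$ is feasible, and $g_{N+1}=0$ gives $\mathcal{F}_\lambda^{\rm D}=0$. When $\delta > B_{N,k}(\nu)$, the strict monotonicity of $h_z$ on $\{k,\ldots,N+1\}$ from Lemma~\ref{lem:gzhzMono} guarantees that $\hat z$ is well-defined. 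A standard extremal-point argument for linear-fractional programs shows that the minimum is attained on a two-point mixture supported on $\{\hat z,\hat z+1\}$, with the weights pinned by the tight constraint $\sum_z q_z h_z=\delta$ to $q_{\hat z}=\kappa_{\hat z}(k,N,\delta,\lambda)$ and $q_{\hat z+1}=1-\kappa_{\hat z}(k,N,\delta,\lambda)$. This yields $f_k(\rho)=\tilde\zeta_\lambda(k,N,\delta,\hat z)$ and hence $\mathcal{F}_\lambda^{\rm D} = \tilde\zeta_\lambda(k,N,\delta,\hat z)/\delta$.

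The main obstacle will be the rigorous justification of the first step, namely that the infimum over all permutation-invariant states $\rho$ is attained on mixtures of the structured $z$-sector states with orthogonal components placed in the eigenspace of $\Omega$ with eigenvalue $\lambda$. This requires combining the two-eigenvalue structure of $\Omega$ with a careful permutation-symmetric decomposition of $\caH^{\otimes(N+1)}$---precisely the content of Theorem~S1 of Ref.~\cite{Li2023RobustQSV}, which I would cite for the formal reduction. Once that reduction is in hand, the remaining optimization is a transparent one-dimensional computation driven by Lemma~\ref{lem:gzhzMono}.
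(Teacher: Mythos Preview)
Your proposal is correct and takes essentially the same approach as the paper: the paper's entire proof is the one-line remark that the proposition ``follows from Theorem~S1 in Ref.~\cite{Li2023RobustQSV},'' and your sketch simply unpacks the mechanism behind that theorem (reduction to $z$-sectors via the two-eigenvalue structure of $\Omega$, followed by the finite linear-fractional program) while ultimately deferring to the same citation. One small caveat: the claim that the optimum lands on the consecutive pair $\{\hat z,\hat z+1\}$ needs more than the separate monotonicity of $h_z$ and $g_z$ in Lemma~\ref{lem:gzhzMono}---it requires a slope/convexity property of the points $(h_z,g_z)$---but that too is part of Theorem~S1, so your citation already covers it.
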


\section{Experimental details}\label{app:experiment}
To prepare the singlet state $|\Psi\>$, the continuous ultraviolet beam from a semiconductor laser (Toptica TopMode) with an average power reduced to about 10 mW and wavelength centered at 405.4 nm is first prepared in the polarization state $\frac{1}{\sqrt{2}}(|H\rangle+|V\rangle)$ by virtue of a half-wave plate and then focused on a 20-mm-long periodically poled potassium titanyl phosphate (PPKTP) crystal by a lens (see Fig.~1). The clockwise and counterclockwise pumping beams are adjusted to maximally overlap to resist decoherence from thermal and mechanical fluctuations. The PPKTP crystal is placed on a thermoelectric cooler and the whole Sagnac interferometer is covered in an acrylic box, which is temperature-stabilized within 0.001~\textcelsius. The temperature of this nonlinear crystal is set at about 44.4~\textcelsius \ to produce photon pairs with nondegenerate wavelengths centered at 808.3 nm (signal) and 813.3 nm (idle), respectively. A phase compensation plate (PCP) is tilted to yield the singlet state $|\Psi\rangle$.

A free-space EOM1 with maximum 100-MHz bandwidth is used by Bob to prepare the correlated states $\rho_1$ and $\rho_2$ in Eqs.~\eqref{correlated} and \eqref{correlated2}. One QRNG generates random numbers deciding which system in $\rho_1$ or $\rho_2$ is modulated, and then a data acquisition card (DAQ, NI PCI) programmed by Bob sends  high-speed synchronous 5-V transistor-transistor logic signals to the radio-frequency amplifier (RFA1) to drive EOM1.

To apply the verification strategy characterized by the operator $\Omega =\frac{1}{3}\left( P_{XX}^- + P_{YY}^- +P_{ZZ}^-\right) $, Alice randomly chooses one of the three Pauli operators $X, Y$, or $Z$, and performs the same Pauli measurement on the two photons. A 808-nm bandpass interference filter combines the signal photon (808.3 nm) and the reflected idle photon (813.3 nm) to be collinear, and hence Alice can apply the strategy  $\Omega$ using a set of EOM2 and EOM3. The wavelengths of signal and idle photons are only slightly nondegenerate so as to keep the birefringence effect imposed by the EOMs almost identical, and also the projective measurements on the two photons.

 The other QRNG is used to switch the three measurement settings employed in the strategy $\Omega$ by controlling the voltages of EOM2 and EOM3.  The fast axes of EOM2 and EOM3 are set to deviate from the horizontal position by $0^\circ$ and $22.5^\circ$, respectively. Both EOMs have two possible voltage settings: in one setting both EOM2 and EOM3 act as the identity operation, while in the other setting  EOM2 acts as a quarter-wave plate, and EOM3 acts as a half-wave plate. Combined with a polarization beam splitter (PBS), the setup can realize the verification strategy $\Omega$ consisting of the three measurements $XX$, $YY$, and $ZZ$ at high speed.

In the experiment, Alice's and Bob's operations are determined by two different modules of the Labview program on the same computer.   After receiving each system, Alice categorizes it into multichannel time stamps. If the coincidence results from one photon reflected and the other photon transmitted at the PBS, then this system is recorded as passing the test. 
\end{appendix}

\bibliography{cites}	

\end{document}